\newtheorem{defn}{Definition}
\newtheorem{thm}{Theorem}
\newtheorem{prop}{Proposition}
\newtheorem{cor}{Corollary}
\newtheorem{rem}{Remark}
\newtheorem{exam}{Example}
\newtheorem{assump}{Assumption}
\newcommand{\E}{\mathcal{E}}
\newcommand{\EE}{\mathbb{E}}
\newcommand{\V}{\mathbb{V}}
\newcommand{\G}{\mathcal{G}}
\newcommand{\diag}{\mathrm{diag}}
\begin{document}

\title{Network Feedback Passivation of Passivity-Short Multi-Agent Systems}

\author{Miel Sharf and Daniel Zelazo
\thanks{M. Sharf and D. Zelazo are with the Faculty of Aerospace Engineering, Israel Institute of Technology, Haifa, Israel.
    {\tt\small msharf@tx.technion.ac.il, dzelazo@technion.ac.il}.  This work was supported by the German-Israeli Foundation for Scientific Research and Development.}
}

\maketitle
\begin{abstract}
In this paper, we propose a network-optimization framework for the analysis of multi-agent systems with passive-short agents. We consider the known connection between diffusively-coupled maximally equilibrium-independent passive systems, and network optimization, culminating in a pair of dual convex network optimization problems, whose minimizers are exactly the steady-states of the closed-loop system. 
We propose a network-based regularization term to the network optimization problem and show that it results in a network-based feedback using only relative outputs.
We prove that if the average of the passivity indices is positive, then we convexify the problem, passivize the agents, and that steady-states of the augmented system correspond to the minimizers of the regularized network optimization problem. We also suggest a hybrid approach, in which only a subset of agents sense their own output, and show that if the set is nonempty, then we can always achieve the same correspondence as above, regardless of the passivity indices. We demonstrate our results on a traffic model with non-passive agents and limited GNSS reception.
\end{abstract}

\section{Introduction}\label{sec.Intro}
Distributed control has been extensively studied in the last few years, due to its applications in many scientific and engineering fields \cite{Fujita2007,Stan2007,Franchi2011}. One repeatedly used method in cooperative control is the notion of passivity \cite{Bai2011}. It was first introduced in this framework in \cite{Arcak2007} to study group coordination, but was later used in other areas as robotics, biochemical systems and cyber-physical systems \cite{Chopra2006,Scardovi2010,Antsaklis2013}. 

Many variants of passivity have been introduced over the years to tackle problems in cooperative control, including incremental passivity \cite{Pavlov2008} and relaxed co-coercivity \cite{Stan2007,Scardovi2010}. Another important notion is equilibrium-independent passivity (EIP) \cite{Hines2011}, which considers passivity with respect to all steady-state I/O pairs. For EIP systems, the steady-state I/O pairs are related by a single-valued function.
EIP was used to study port-Hamiltonian systems \cite{vanderSchaft2013}, but it does not apply to single integrators and other marginally stable systems.

To tackle this problem, the notion of maximal equilibrium-independent passivity (MEIP) for SISO systems was introduced in \cite{Burger2014}. It also considers passivity with respect to all equilibria, but asks the collection of steady-state input-output pairs to be a maximal monotone relation instead of a function. In \cite{Burger2014,Sharf2018a}, a connection was established between analysis of diffusively-coupled MEIP systems and network optimization theory, culminating in two dual network optimization problems characterizing the steady-states of the diffusively coupled network. This framework was used in \cite{Sharf2017} to solve the synthesis problem, and in \cite{Sharf2018b} to solve a network identification problem.

In practice, many systems are not passive. Their shortage of passivity is usually quantified using passivity indices \cite{Khalil2001,Zhu2014}. We consider a diffusively-coupled network of agents, each having a uniform shortage of passivity across all equilibria. Analysis of these passive short diffusively-coupled systems was tackled in \cite{Jain2018}, and later generalized in \cite{Jain2019}, by regularizing the network optimization problems obtained by the network optimization framework for MEIP systems. However, the solution requires an appropriate loop transformation for each individual agent, which is not applicable in many situations, either because the agents cannot sense their own output, or the agents are not amenable to the network's designer. We propose a different solution here, relying on the network structure to overcome the lack of passivity. Our contributions are as follows:

We propose a Tikhonov-type regularization to the network optimization problem, consisting only of network-level variables. We show that if the sum of the passivity indices over the agents is positive, then the proposed regularization not only convexifies the corresponding optimization problem, but it is equivalent to a \emph{network-only} feedback passivation of the closed-loop system.Furthermore, we propose another Tikhonov-type regularization requiring no assumptions on the passivity indices of the agents, containing both network-level variables, as well as variables belonging to a prescribed set of agents. 
We show that this hybrid regularization term is equivalent to a feedback passivation containing a network-only term, and a loop transformation for each of the agents in the prescribed set. We also show that if the prescribed set is nonempty, then the regularization term convexifies the optimization problem and passivizes the system.

\if(0)
\begin{itemize}
\item We propose a Tikhonov-type regularization to the network optimization problem, consisting only of network-level variables. We show that if the sum of the passivity indices over the agents is positive, then the proposed regularization not only convexifies the corresponding optimization problem, but it is equivalent to a \emph{network-only} feedback passivation of the closed-loop system. 
\item We propose another Tikhonov-type regularization requiring no assumptions on the passivity indices of the agents, containing both network-level variables, as well as variables belonging to a prescribed set of agents. 
We show that this hybrid regularization term is equivalent to a feedback passivation containing a network-only term, and a loop transformation for each of the agents in the prescribed set. We also show that if the prescribed set is nonempty, then the regularization term convexifies the optimization problem and passivizes the system.
\end{itemize}
\fi

The rest of the paper is as follows. Section \ref{sec.NetOpt} reviews MEIP systems and equilibrium-independent passive-short systems. Section \ref{sec.Passivation} formally states and solves the problem.
Section \ref{sec.Simulations} presents a case study, and Section \ref{sec.Conc} concludes the paper.
\paragraph*{Notations}
This work employs basic notions from algebraic graph theory \cite{Godsil2001}.  An undirected graph $\mathcal{G}=(\mathbb{V},\mathbb{E})$ consists of a finite set of vertices $\mathbb{V}$ and edges $\mathbb{E} \subset \mathbb{V} \times \mathbb{V}$.  We denote the edge that has ends $i$ and $j$ in $\mathbb{V}$ by $k=\{i,j\} \in \mathbb{E}$. For each edge $k$, we pick an arbitrary orientation and denote $k=(i,j)$  when $i\in \mathbb{V}$ is the \emph{head} of edge $k$ and $j \in \mathbb{V}$ the \emph{tail}. The incidence matrix of $\mathcal{G}$, denoted $\mathcal{E}\in\mathbb{R}^{|\mathbb{E}|\times|\mathbb{V}|}$, is defined such that for edge $k=(i,j)\in \mathbb{E}$, $[\mathcal{E}]_{ik} =+1$, $[\mathcal{E}]_{jk} =-1$, and $[\mathcal{E}]_{\ell k} =0$ for $\ell \neq i,j$. 
For a convex function $F$, it's dual is also convex and defined by $F^\star(b) = \sup_{a}\{a^Tb - F(a)\}$\cite{Rockafeller1997}. 

\section{Passivity and Network Optimization}\label{sec.NetOpt}

We consider the following dynamical system, defined on a graph $\G = (\V,\EE)$. Namely we consider $|\V|$ agents and $|\EE|$ controllers, having the following state-space models,
\begin{align}
\Sigma_i: \begin{cases}\dot{x}_i = f_i(x_i,u_i) \\ y_i = h_i(x_i,u_i)\end{cases}, \Pi_e: \begin{cases} \dot{\eta}_e = \phi_e(\eta_e,\zeta_e) \\ \mu_e = \psi_e(\eta_e,\zeta_e)\end{cases},
\end{align}
for $i\in\V,\ e\in\EE$. We consider stacked vectors $u = [u_1,...,u_{|\V|}]^T$, and similarly for $y,\zeta$ and $\mu$. The loop is closed by taking $\zeta(t) = \E^Ty(t)$ and $u(t) = -\E\mu(t)$. The closed loop is called a \emph{diffusively-coupled} system, denoted as a triplet $(\G,\Sigma,\Pi)$, and exhibited in Fig. \ref{fig.ClosedLoop}.

\begin{figure}[!t]
\begin{center}
	\subfigure[A diffusively coupled network $(\G,\Sigma,\Pi)$.] {\scalebox{.28}{\includegraphics{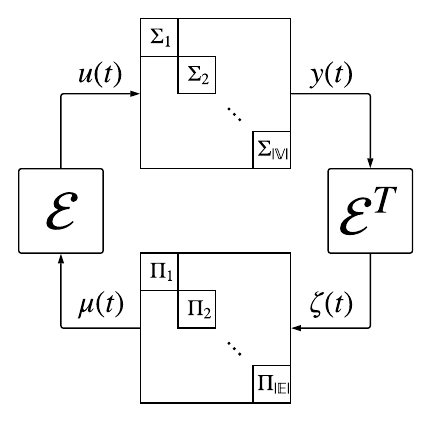}}\label{fig.ClosedLoop}}\hfill
	\subfigure[A general feedback interconnection.] {\scalebox{.28}{\includegraphics{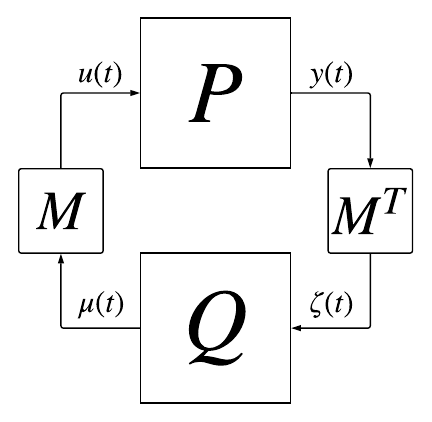}}\label{fig.AbstractFeeback}}
  \caption{Block diagrams of diffusive networks and general feedback systems.}
\end{center}
\vspace{-20pt}
\end{figure}


Our approach to the analysis of the system $(\G,\Sigma,\Pi)$ is based on the notion of equilibrium-independent passivity (EIP) \cite{Hines2011}, and maximal equilibrium-independent passivity (MEIP) \cite{Burger2014}. As the name suggests, these properties require that the system is passive with respect to any equilibrium I/O (I/O) pair. Moreover, they study the collection of I/O pairs of the system. In EIP, we demand that there is a continuous function $k$ mapping steady-state inputs $\mathrm u_{ss}$ to steady-state outputs $\mathrm y_{ss}$. EIP holds for many systems, but it leaves out other important systems, like marginally stable systems, e.g. the single integrator $\dot{x} = u , y=x$, whose steady-state input output-pairs are $\mathrm u=0$ and any $\mathrm y\in\mathbb{R}$. To address this issue, MEIP was proposed in \cite{Burger2014}. In MEIP, we consider the collection $k$ of all steady-state I/O pairs $(\mathrm u_{ss},\mathrm y_{ss})$, called the \emph{steady-state input output relation} of the system.. It gives rise to two set-valued functions, denoted $k$ and $k^{-1}$. If $\mathrm u$ is a steady-state input, and $\mathrm y$ is a steady-state output, we let $k(\mathrm u)$ be the set of all steady-state outputs corresponding to $\mathrm u$, and $k^{-1}(\mathrm y)$ be the set of all steady-state inputs corresponding to $\mathrm y$. We now define MEIP:

\begin{defn}[\hspace{-1pt}\cite{Burger2014}] \label{def.MEIP}
A SISO system is (output-strictly) MEIP if the system is (output-strictly) passive with respect to any steady-state I/O pair $(\mathrm u,\mathrm y)$, and its steady-state relation is maximally monotone.\footnote{For all steady-state I/O pairs $\mathrm{(u_1,y_1),(u_2,y_2)}$, $\mathrm{(u_2-u_1)(y_2-y_1)} \ge 0$ and is not contained in a larger monotone relation.}
\end{defn}

The monotonicity requirement in Definition \ref{def.MEIP} stems from two origins. The first being the fact that the steady-state function $k$ for EIP systems is monotone. The second is that maximally monotone relations are closely tied to convex functions. A theorem by Rockafellar \cite{Rockafellar1966} shows that a maximally monotone relation is the subgradient of a convex function (and vice versa), and the convex function is unique up to an additive constant. 

We now consider a diffusively coupled network $(\G,\Sigma,\Pi)$ with MEIP agents and controllers. We denote the steady-state I/O relations of $\Sigma_i,\Pi_e$ by $k_i,\gamma_e$ respectively, and the stacked versions by $k,\gamma$. By the theorem above, there exists convex functions $K_i,\Gamma_e$ such that $k_i = \partial K_i$ and $\gamma_e = \partial \Gamma_e$, and their sums $K,\Gamma$ satisfy $\partial K = k, \partial \Gamma = \gamma$. In \cite{Sharf2018a}, it is shown that $\mathrm y$ is a steady-state output of the closed-loop system if and only if $0 \in k^{-1}(\mathrm y) + \E\gamma(\E^T\mathrm y)$. Moreover, $\partial K = k, \partial \Gamma = \gamma$, meaning that $k^{-1}(\mathrm y) + \E\gamma(\E^T\mathrm y)$ is exactly the subgradient of $K^\star(\mathrm y) + \Gamma(\E^T \mathrm y)$, where $K^\star(\mathrm y) = \min_{\mathrm u} \{\mathrm y^T\mathrm u - K(\mathrm u)\}$ is the convex dual function of $K$ and $\partial K^\star = k^{-1}$ \cite{Rockafeller1997}. Using convex optimization theory, \cite{Burger2014} proved the following: 
\begin{thm}[\cite{Burger2014}] \label{thm.Analysis}
Consider a diffusively-coupled network $(\G,\Sigma,\Pi)$ with output-strictly MEIP agents and MEIP controllers. Let $K,\Gamma$ be the stacked integral functions for the agents and the controllers, respectively. Then the signals $u(t),y(t),\zeta(t),\mu(t)$ converge to constant steady-states, which are (dual) optimal solutions to the static network optimization problems:
\begin{center}
\begin{tabular}{ c||c }
 \textbf{Optimal Potential Problem} & \textbf{Optimal Flow Problem}  \\ \hline \\
 $ \begin{array}{cl} \underset{y,\zeta}{\min} &K^\star(y) + \Gamma(\zeta)\\
s.t.&\mathcal{E}^Ty = \zeta 
\end{array} $&  $ \begin{array}{cl}\underset{u,\mu}{\min}& K(u) + \Gamma^\star(\mu) \\
s.t. &\mu = -\mathcal{E}u.
\end{array} $ 
\end{tabular}
\end{center}
\end{thm}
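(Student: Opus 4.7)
The plan is to combine a storage-function / LaSalle argument, which exploits the MEIP property of the agents and controllers together with the diffusive interconnection structure, with a convex-analytic identification of the steady-state equations of the closed loop as the KKT conditions of the two static network optimization problems. I would proceed in three stages: first, produce a candidate steady-state $(\bar u,\bar y,\bar \zeta,\bar\mu)$ from a minimizer of the optimal potential problem; second, build a Lyapunov function around it and run LaSalle; and third, match the KKT systems of the two programs to the closed-loop steady-state equations and invoke Fenchel duality.

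For the first stage, note that $y \mapsto K^\star(y) + \Gamma(\E^T y)$ is convex because $K^\star$ and $\Gamma$ are convex and the composition with $\E^T$ preserves convexity. Under the standard growth/coercivity conditions implicit in MEIP, a minimizer $\bar y$ exists, and Fermat's rule together with the subdifferential sum rule of Rockafellar gives
\begin{equation*}
0 \in \partial K^\star(\bar y) + \E\,\partial \Gamma(\E^T\bar y) = k^{-1}(\bar y) + \E\,\gamma(\E^T\bar y).
\end{equation*}
Selecting $\bar\mu \in \gamma(\E^T\bar y)$ and $\bar u = -\E\bar\mu \in k^{-1}(\bar y)$, and setting $\bar\zeta = \E^T\bar y$, produces a genuine closed-loop equilibrium in which each $\Sigma_i$ and each $\Pi_e$ is at a steady I/O pair.

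For the second stage, output-strict MEIP provides, for each agent $\Sigma_i$, a nonnegative storage function $S_i$ relative to $(\bar u_i,\bar y_i)$ with $\dot S_i \le (u_i-\bar u_i)(y_i-\bar y_i) - \rho_i(y_i-\bar y_i)^2$ for some $\rho_i>0$, and MEIP of the controllers gives $W_e \ge 0$ with $\dot W_e \le (\zeta_e-\bar\zeta_e)(\mu_e-\bar\mu_e)$. Taking $V = \sum_i S_i + \sum_e W_e$ and using $u-\bar u = -\E(\mu-\bar\mu)$ and $\zeta-\bar\zeta = \E^T(y-\bar y)$, the cross terms cancel through the telescoping identity $(u-\bar u)^T(y-\bar y) + (\zeta-\bar\zeta)^T(\mu-\bar\mu) = 0$, leaving
\begin{equation*}
\dot V \le -\sum_{i\in\V}\rho_i(y_i-\bar y_i)^2 \le 0.
\end{equation*}
Sublevel sets of $V$ give boundedness of trajectories, and LaSalle's invariance principle forces $y(t)\to\bar y$; the MEIP (single-valuedness on the relation graph) then propagates convergence to the remaining signals $u,\zeta,\mu$.

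For the third stage, I would write the Lagrangian of the Optimal Potential Problem with multiplier $\lambda$ for $\E^T y = \zeta$; its KKT conditions give $\E\lambda \in \partial K^\star(y) = k^{-1}(y)$ and $-\lambda \in \partial \Gamma(\zeta) = \gamma(\E^T y)$, which under the identification $\lambda = -\mu$ are precisely the steady-state equations $u \in k^{-1}(y)$, $\mu \in \gamma(\zeta)$, $\zeta=\E^T y$, $u = -\E\mu$. The Optimal Flow Problem is, after the change of variables $u,\mu$, the Fenchel dual of the potential problem, and strong duality (via Slater, since the constraints are linear and the domains of $K,\Gamma^\star$ are relatively open) shows the optimal primal–dual pair $(u,\mu)$ coincides with the steady-state values. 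The principal obstacle I anticipate is the convergence step: guaranteeing that $y\to\bar y$ implies convergence of the full state, which requires the zero-state-detectability properties that are hidden in the MEIP definition and need to be unpacked carefully; the existence claim in the first stage also rests on a coercivity/properness side condition on $K^\star + \Gamma\circ\E^T$ that must be verified from the data rather than assumed.
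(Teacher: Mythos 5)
Your proposal follows essentially the same route the paper attributes to this theorem (see Remark~\ref{rem.AbstractAnalysisGradientSteadyStates}): a passivity-based convergence argument summing agent and controller storage functions so the interconnection cross-terms cancel, combined with the observation that the closed-loop steady-state inclusion $0 \in k^{-1}(\mathrm y) + \E\gamma(\E^T\mathrm y)$ is exactly the first-order optimality condition of the convex problem (OPP), with (OFP) as its Fenchel dual. The caveats you flag (coercivity for existence of a minimizer, and detectability to pass from $y\to\bar y$ to convergence of the full signal set) are precisely the technical conditions handled in the cited references, so no further changes are needed.
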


\begin{rem} \label{rem.AbstractAnalysisGradientSteadyStates}
The proof of Theorem \ref{thm.Analysis}, as seen in \cite{Sharf2018a}, consists of two parts. The first shows that if there is a steady-state I/O pair $(\mathrm{u,y})$ for the agents $\Sigma$, a steady-state I/O pair $(\zeta,\mu)$ for the controllers $\Pi$, and $\zeta = \E^T\mathrm y, \mathrm u=-\E\mu$, then the closed-loop system converges. This part is based on the output-strict passivity of $\Sigma$ and the passivity of $\Pi$. The second part (for (OPP)) shows that the steady-state equation $0\in k^{-1}(\mathrm y) + \E\gamma(\E^T\mathrm y)$ is equivalent to the minimization of $K^\star(\mathrm y) + \Gamma(\E^T\mathrm y)$. This part is based on the convexity of the integral function $K^\star(\mathrm y) + \Gamma(\E^T\mathrm y)$.

The feedback configuration in Fig. \ref{fig.ClosedLoop} can be thought of more abstractly as the symmetric feedback configuration of two MIMO systems $P$ and $Q$ with the matrix $M$, as shown in Fig. \ref{fig.AbstractFeeback}.
%
This added layer of abstraction, in which we treat the stacked agents and controllers as MIMO dynamical systems and study their I/O steady-state behavior, will be of great importance later. The reason is that $P$, in our case, will be a feedback connection of the agents $\Sigma$ with some network control law, coupling the agents together, and forcing us to consider them as a single, indecomposable system.
\end{rem}


Lastly, we wish to deal with multi-agent systems that have shortage of passivity. We define equilibrium-independent shortage of passivity, as defined in \cite{Jain2018}.
\begin{defn}
The system $\Sigma$ is \emph{equilibrium-independent output-passive short} (EIOPS) if there exist some $\rho < 0$ such that for any steady-state I/O pair $(\mathrm u,\mathrm y)$, there exists a storage function $S(x)$ such that:
\begin{align}\label{eq.OutputPassivityIneq}
\dot{S} \le (u - \mathrm u)^T(y-\mathrm y) - \rho (y - \mathrm y)^T(y - \mathrm y).
\end{align}
\end{defn}
One should note that \eqref{eq.OutputPassivityIneq} also defines passivity (if $\rho = 0$), and output-strict passivity (if $\rho > 0$).

\section{Network Regularization and Passivation}\label{sec.Passivation}
From now on, we fix a collection $\{\Sigma_i\}$ of $n$ agents, and an underlying graph $\G = (\V,\EE)$ on $n$ vertices. 
For the rest of the paper, we also make the following assumption:
\begin{assump} \label{assump.1}
For each $i\in \V$, there is a storage function $S_i$ and some $\rho_i \in \mathbb{R}$ such that the SISO dynamical system $\Sigma_i$ satisfies \eqref{eq.OutputPassivityIneq} for any $(\mathrm {u_i,y_i}) \in k_i$. Moreover, we assume that the inverse steady-state relation $k_i^{-1}$ is a function defined over $\mathbb{R}$. In this case, we can choose an integral function for $k_i^{-1}$ defined by $K_i^\star(\mathrm y) = \int_{\mathrm y_0}^{\mathrm y} k_i^{-1}(\tilde{\mathrm y})\mathrm{d}\tilde{\mathrm y}$. Thus $K_i^\star$ are differentiable and $\nabla K_i^\star = k_i^{-1}$. We denote $K(\mathrm y) = \sum_i K_i(\mathrm y_i)$.
\end{assump}

Suppose one chooses MEIP controllers $\{\Pi_e\}$ over the edges, with steady-state I/O relations $\gamma_e$ and integral functions $\Gamma_e$, and let $\Gamma(\zeta) = \sum_e \Gamma_e(\zeta_e)$. In this case, Theorem \ref{thm.Analysis} does not prove convergence due to lack of passivity, but moreover, the problem (OPP), minimizing $K^\star(\mathrm y) + \Gamma(\zeta)$ such that $\E^T \mathrm y = \zeta$, might not be convex. Indeed, as seen in \cite{Jain2018,Jain2019},  the integral functions $K_i^\star$ might not be convex. In \cite{Jain2018}, a Tikhonov type regularization term of the form $\frac{1}{2}\sum_{i\in\V} \beta_i \mathrm{y}_i^2$, where $\beta_i > 0$ \cite{Boyd2004} was introduced. In turn, this led to the passivation of each agent using the control law $u_i = v_i - \beta_i y_i$, where $v_i$ is some exogenous input, assuming $\beta_i > -\rho_i$. Later, an analysis theorem was established for closed loop with the new, passivized agents, showing that the steady-states of this closed loop networked system correspond to minima of the regularized (OPP), minimizing $K^\star(\mathrm y) + \Gamma(\zeta) + \frac{1}{2}\mathrm y^T\diag(\beta_i)\mathrm y$ with the constraint $\zeta = \E^T \mathrm y$.  For notational convenience, we denote $R=\diag\{\rho_1,\ldots,\rho_{|\V|}\}$.

This method allows for analysis, and later synthesis of passive-short multi-agent systems. However, it requires each agent to implement the control law $u_i = v_i - \beta_i y_i$. This might not be possible in applications for two reasons. First, the agents might not be able to sense their self-output $y_i$, but only relative outputs $y_i - y_j$. This is the case in many formation control problems, or real-life applications for robots in GNSS-deprived areas. Second, the planner of the multi-agent system might not be able to intervene with the agents' dynamics. This is the case in open networks. Thus we strive for a different network regularization term. 

\subsection{Network-Only Regularization and Passivation}

We consider a different Tikhonov-type regularization term, of the form of $\frac{1}{2}  \sum_{e\in \EE} \beta_e\zeta_e^2$, depending only on the network variables $\zeta$. This gives rise to the network-regularized optimal potential problem (NROPP):
\begin{align}\tag{NROPP} \label{NROPP}
\begin{split}
&\hspace{-11pt}\min_{{\rm y}, {\zeta}}K^\star({\rm y}) + \Gamma({\zeta})+\frac{1}{2}\zeta^TB\zeta \;\;{s.t.}~E^T{\rm y} = {\zeta},
\end{split}
\end{align}
where $B=\diag(\beta)=\diag\{\beta_1,\ldots,\beta_{|\EE|}\}$ 
is a design parameter that will be appropriately chosen to make (NROPP) convex. We can consider the cost function of (NROPP) as the sum of two functions - the first is $\Gamma(\zeta)$, which is known to be convex. The second is $K^\star(\mathrm y) + \frac{1}{2} \zeta^T \diag(\beta)\zeta$. Following the notation in \cite{Jain2018} and recalling that $\zeta = \E^T\mathrm{y}$, we denote the latter as
\begin{align} \label{eq.Lambda}
\Lambda_N^\star(\mathrm y) = K^\star(\mathrm y) + 0.5 \mathrm y^T \E\diag(\beta)\E^T \mathrm y.
\end{align}
The following theorem proves that this new, regularized integral function for the agents is induced by a network consensus-type feedback.
\begin{prop} \label{thm.Lambda}
Consider the agents $\Sigma_i$ satisfying Assumption \ref{assump.1}. Let $\Lambda_N^\star$, given by \eqref{eq.Lambda}, be the network-regularized integral function for the agents. Then $\Lambda_N^\star$ is differentiable. Moreover, consider the MIMO system $\tilde{\Sigma}$ given by the parallel interconnection of the agents $\{\Sigma_i\}_{i\in \V}$ with an output-feedback control of the form
\begin{align} \label{eq.feedback}
u = v - \E\diag(\beta)\E^Ty,
\end{align}
with some new exogenous input $v \in \mathbb{R}^n$, and let $\lambda_N$ be its I/O steady-state relation. Then $\lambda_N^{-1}$ is a function, and $\nabla \Lambda_N^\star = \lambda_N^{-1}$. 
\end{prop}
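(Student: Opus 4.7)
The plan is to decouple the two claims: first prove differentiability of $\Lambda_N^\star$ and compute its gradient, then identify the steady-state I/O relation of $\tilde\Sigma$ by direct substitution into the closed-loop equations. Both parts are essentially algebraic, and no convexity argument is required at this stage (indeed $\Lambda_N^\star$ need not be convex in general; that will become the content of the next result).

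For the analytic part, Assumption \ref{assump.1} gives each $K_i^\star$ as differentiable on $\mathbb{R}$ with $\nabla K_i^\star = k_i^{-1}$, so $K^\star = \sum_i K_i^\star$ is differentiable with stacked gradient $\nabla K^\star = k^{-1}$ (a genuine single-valued function, not a set). Adding the smooth quadratic $\tfrac{1}{2}\mathrm y^T \E B \E^T \mathrm y$ preserves differentiability, and yields
\begin{align*}
\nabla \Lambda_N^\star(\mathrm y) = k^{-1}(\mathrm y) + \E B \E^T \mathrm y.
\end{align*}

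For the steady-state part, I would characterize $\lambda_N$ directly from the state-space description of $\tilde\Sigma$. If $(v,y)$ is a steady-state I/O pair with associated state $x$, then $\dot x_i = f_i(x_i,u_i) = 0$ for every $i\in\V$ forces $(u_i,y_i)\in k_i$, i.e.\ $u_i = k_i^{-1}(y_i)$, since Assumption \ref{assump.1} says $k_i^{-1}$ is single-valued on all of $\mathbb{R}$. Substituting the static feedback $u = v - \E B \E^T y$ and stacking gives the scalar identity
\begin{align*}
v \;=\; k^{-1}(y) + \E B \E^T y \;=\; \nabla \Lambda_N^\star(y).
\end{align*}
Conversely, for any $y\in\mathbb{R}^n$ one picks $x_i$ so that $(k_i^{-1}(y_i),y_i)$ is an equilibrium of $\Sigma_i$, which makes the $v$ above an admissible steady-state input for $\tilde\Sigma$. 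Hence $\lambda_N^{-1}(y) = \{\nabla\Lambda_N^\star(y)\}$ is single-valued, and $\nabla\Lambda_N^\star = \lambda_N^{-1}$ as relations.

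The main obstacle, such as it is, is bookkeeping: tracking which map is single-valued in which direction (the forward relation $k$ may be set-valued, while $k^{-1}$ is a function by Assumption \ref{assump.1}), and being careful that the steady-state analysis of the closed loop $\tilde\Sigma$ reduces cleanly to the agents' own steady-state relations composed with the static feedback matrix $\E B \E^T$. Once this identification is made the result is essentially a chain-rule computation; no passivity or monotonicity argument is invoked, which is why the statement places no restriction on the signs of the $\rho_i$ or $\beta_e$.
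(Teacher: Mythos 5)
Your proposal is correct and follows essentially the same route as the paper's own proof: differentiability of $\Lambda_N^\star$ as a sum of the differentiable $K^\star$ and a quadratic, the gradient formula $\nabla\Lambda_N^\star(\mathrm y) = k^{-1}(\mathrm y) + \E\diag(\beta)\E^T\mathrm y$, and the identification $\mathrm v = \mathrm u + \E\diag(\beta)\E^T\mathrm y$ showing that $(\mathrm v,\mathrm y)$ is a steady-state pair of the augmented system. Your added care about the converse direction and about which maps are single-valued is a welcome tightening of the paper's more terse argument, but it is not a different method.
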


\begin{proof}
The proof is similar to the proof of Theorem~1 in \cite{Jain2018}. $\Lambda_N^\star$ is differentiable as a sum of the differentiable functions $K^\star$ and $\frac{1}{2}\mathrm y^T \E \diag(\beta) \E^T \mathrm y$. Its derivative is given by 
\begin{align} \label{eq.DeriveLambda}
\nabla \Lambda_N^\star(\mathrm y) = k^{-1}(\mathrm y) + \E\diag(\beta)\E^T \mathrm y.
\end{align}
If $(\mathrm{u,y})$ is a steady-state I/O pair for the agents, and we denote $\mathrm v = \nabla \Lambda_N^\star(\mathrm y)$, then \eqref{eq.DeriveLambda} is equivalent to
 \begin{align} \label{eq.DeriveLambda}
\mathrm v = \mathrm u + \E\diag(\beta)\E^T \mathrm y.
\end{align}
Rearranging the terms, we conclude that $(\mathrm {v,y})$ is a steady-state I/O pair for the closed-loop system given by the agents with the network feedback as in \eqref{eq.feedback}. This completes the proof.
\end{proof}

In other words, Proposition \ref{thm.Lambda} gives the following interpretation of (NROPP). It is the optimal potential problem (OPP) for the closed-loop system which is the feedback connection of the controllers $\Pi$ with the augmented agents $\tilde{\Sigma}$, as seen in Fig. \ref{fig.NetworkOnly}. In the spirit of feedback connection of passive systems, and because the controllers $\Pi$ are MEIP, we wish to understand when $\tilde{\Sigma}$ is passive.

\begin{prop} \label{prop.passivity}
Suppose that $R + \E\diag(\beta)\E^T$ is positive-semi definite.\footnote{Recall $R=\diag\{\rho_1,\ldots,\rho_{|\V|}\}$ with $\rho_i$ the passivity index of $\Sigma_i$.} Then $\tilde{\Sigma}$ is passive with respect to any steady-state I/O pair. Moreover, if the matrix is positive-definite, then $\tilde{\Sigma}$ is output-strict passivity.
\end{prop}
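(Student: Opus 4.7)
The plan is to take the stacked storage function $S(x) = \sum_{i\in\V} S_i(x_i)$ of the agents as a candidate storage function for $\tilde{\Sigma}$, and to reduce the passivity inequality for $\tilde{\Sigma}$ to the sum of the individual equilibrium-independent output-passivity-short inequalities for the $\Sigma_i$'s, into which the feedback \eqref{eq.feedback} has been substituted.

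First, I would fix a steady-state I/O pair $(\mathrm v,\mathrm y)$ of $\tilde{\Sigma}$. By Proposition \ref{thm.Lambda} (more precisely, by rearranging \eqref{eq.DeriveLambda}), the pair $(\mathrm u,\mathrm y)$ with $\mathrm u = \mathrm v - \E\diag(\beta)\E^T \mathrm y$ is a steady-state I/O pair of the stacked open-loop agents $\Sigma$. Each $\Sigma_i$ therefore admits a storage function $S_i$ satisfying \eqref{eq.OutputPassivityIneq} with respect to $(\mathrm u_i,\mathrm y_i)$. Summing these inequalities over $i\in\V$ yields
\begin{align*}
\dot S \le (u-\mathrm u)^T(y-\mathrm y) - (y-\mathrm y)^T R (y-\mathrm y).
\end{align*}

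Next I would substitute the feedback law \eqref{eq.feedback}. For $\tilde{\Sigma}$ in closed loop, the actual input to the agents is $u = v - \E\diag(\beta)\E^T y$, while at steady state $\mathrm u = \mathrm v - \E\diag(\beta)\E^T \mathrm y$, so
\begin{align*}
u-\mathrm u = (v-\mathrm v) - \E\diag(\beta)\E^T(y-\mathrm y).
\end{align*}
Plugging this into the previous inequality and collecting the quadratic terms in $y-\mathrm y$ gives
\begin{align*}
\dot S \le (v-\mathrm v)^T(y-\mathrm y) - (y-\mathrm y)^T\!\bigl[R + \E\diag(\beta)\E^T\bigr](y-\mathrm y).
\end{align*}
This is the key identity, and the rest of the argument is immediate from the hypothesis on $R+\E\diag(\beta)\E^T$: if the matrix is positive semidefinite, the quadratic form is nonnegative and can be dropped, giving passivity of $\tilde{\Sigma}$ w.r.t.\ $(\mathrm v,\mathrm y)$; if it is positive definite, then it is bounded below by $\epsilon I$ for some $\epsilon>0$, which yields output-strict passivity with index $\epsilon$.

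I do not anticipate a real obstacle; the main thing to be careful about is the sign bookkeeping when combining the passivity-short deficit $-\rho_i(y_i-\mathrm y_i)^2$ (which, since $\rho_i$ may be negative, is a potential source of loss) with the damping $-(y-\mathrm y)^T\E\diag(\beta)\E^T(y-\mathrm y)$ produced by the network feedback. The condition $R+\E\diag(\beta)\E^T \succeq 0$ is exactly what is needed for the network-level damping to dominate the aggregated passivity shortage, and writing the inequality in the matrix form above makes this transparent.
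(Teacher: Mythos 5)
Your proof is correct and follows essentially the same route as the paper: sum the agents' storage functions, substitute the network feedback law into the aggregated EIOPS inequality, and group the quadratic terms into $R+\E\diag(\beta)\E^T$. Your write-up is in fact slightly more careful with the vector notation and with making explicit that $\mathrm u = \mathrm v - \E\diag(\beta)\E^T\mathrm y$ at steady state, but there is no substantive difference in the argument.
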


\begin{proof}
We take a steady-state I/O pair $(\mathrm v, \mathrm y)$ for $\tilde{\Sigma}$, so that $(\mathrm u,\mathrm y)$ is a steady-state I/O pair of $\Sigma$ where $\mathrm v = \mathrm u + \E\diag(\beta)\E^T\mathrm y$. If $S(x) = \sum_i S(x_i)$ is the sum of the storage functions for the agents $\Sigma_i$, then summing \eqref{eq.OutputPassivityIneq} over the agents gives
$$
\dot{S} \le -(y_i - \mathrm y_i)^TR(y_i - \mathrm y_i) + (y_i - \mathrm y_i)^T(u_i - \mathrm u_i).
$$
Substituting $u_i = v_i - \E\diag(\beta)\E^T\mathrm y$ gives
\begin{align*}
\dot{S} \le &-(y_i - \mathrm y_i)^TR(y_i - \mathrm y_i) + (y_i - \mathrm y_i)^T(v_i - \mathrm v_i) \\ &- (y_i - \mathrm y_i)^T\E\diag(\beta)\E^T(y_i - \mathrm y_i).
\end{align*}
Grouping $R$ and $\E\diag(\beta)\E^T$ completes the proof.
\end{proof}

We conclude the following theorem.
\begin{thm} \label{thm.NROPPAnalysis}
Let $\{\Sigma_i\}_{i\in\V}$ be agents satisfying Assumption \ref{assump.1} with passivity indices $\rho_1,...,\rho_n$. Let $\{\Pi_e\}_{e\in\EE}$ be MEIP controllers with stacked integral function $\Gamma$. Consider the closed-loop system with the controller input $\zeta = \E^T y$ and the control input $u = -\E\mu-\E\diag(\beta)\zeta$. If $R + \E\diag(\beta)\E^T$ is positive-definite, then the closed-loop system converges to a steady-state. Moreover, the steady-state output $\mathrm y$ and $\zeta = \E^T\mathrm y$ are the optimal solutions to the problem (NROPP).
\end{thm}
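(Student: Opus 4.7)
The plan is to recast the closed loop as a standard diffusive interconnection between the augmented MIMO system $\tilde{\Sigma}$ of Proposition~\ref{thm.Lambda} and the MEIP controllers $\Pi$, and then invoke the MIMO form of Theorem~\ref{thm.Analysis} described in Remark~\ref{rem.AbstractAnalysisGradientSteadyStates}. The key reformulation is that the control law $u = -\E\mu - \E\diag(\beta)\E^T y$ can be rewritten as $u = v - \E\diag(\beta)\E^T y$ with $v = -\E\mu$; thus, taking $v$ as the exogenous input of $\tilde{\Sigma}$ and closing the loop through $\zeta = \E^T y$ and $v = -\E\mu$ reproduces exactly the dynamics in the theorem statement.

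Next, I would verify that $\tilde{\Sigma}$ is output-strictly MEIP (in the MIMO sense). Output-strict passivity with respect to every steady-state I/O pair is already furnished by Proposition~\ref{prop.passivity} under the hypothesis $R + \E\diag(\beta)\E^T \succ 0$. Proposition~\ref{thm.Lambda} identifies the inverse steady-state relation of $\tilde{\Sigma}$ as the function $\lambda_N^{-1} = \nabla \Lambda_N^\star = k^{-1}(\cdot) + \E\diag(\beta)\E^T(\cdot)$, so the maximal monotonicity of $\lambda_N$ reduces to the convexity of $\Lambda_N^\star$. I would establish this by combining the EIOPS inequality, which implies
\begin{equation*}
(\mathrm{y}_1 - \mathrm{y}_2)^T\bigl(k^{-1}(\mathrm{y}_1) - k^{-1}(\mathrm{y}_2)\bigr) \ge (\mathrm{y}_1 - \mathrm{y}_2)^T R (\mathrm{y}_1 - \mathrm{y}_2)
\end{equation*}
for any two steady-state outputs $\mathrm{y}_1,\mathrm{y}_2$, with the quadratic term contributed by $\E\diag(\beta)\E^T$. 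Adding the two yields strong monotonicity of $\nabla\Lambda_N^\star$, and hence strict convexity of $\Lambda_N^\star$.

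With $\tilde{\Sigma}$ shown to be output-strictly MEIP, Theorem~\ref{thm.Analysis}, applied to the pair $(\tilde{\Sigma},\Pi)$ in the abstract MIMO setting of Remark~\ref{rem.AbstractAnalysisGradientSteadyStates}, yields convergence of $y(t)$ and $\zeta(t)$ to a constant steady-state $(\mathrm{y},\zeta)$ that solves the corresponding (OPP),
\begin{equation*}
\min_{\mathrm{y},\zeta} \; \Lambda_N^\star(\mathrm{y}) + \Gamma(\zeta) \quad \text{s.t.} \quad \E^T \mathrm{y} = \zeta.
\end{equation*}
Substituting the definition of $\Lambda_N^\star$ and using the constraint to rewrite $\tfrac12 \mathrm{y}^T \E\diag(\beta)\E^T \mathrm{y}$ as $\tfrac12 \zeta^T B \zeta$, this program coincides with (NROPP). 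The main obstacle I anticipate is the convexity step above: because $K^\star$ itself may be nonconvex in the passive-short regime, one must lean precisely on the hypothesis $R + \E\diag(\beta)\E^T \succ 0$ to upgrade the EIOPS shift to genuine monotonicity of $\nabla\Lambda_N^\star$, ensuring that $\tilde{\Sigma}$ is indeed a MEIP MIMO system to which Theorem~\ref{thm.Analysis} applies.
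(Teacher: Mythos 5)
Your proof follows essentially the same route as the paper's: recast the closed loop as the feedback interconnection of the augmented system $\tilde{\Sigma}$ (output-strictly passive by Proposition~\ref{prop.passivity}, with $\lambda_N^{-1}=\nabla\Lambda_N^\star$ by Proposition~\ref{thm.Lambda}) with the MEIP controllers $\Pi$, and then invoke Theorem~\ref{thm.Analysis} through Remark~\ref{rem.AbstractAnalysisGradientSteadyStates}. The only difference is that you explicitly verify the convexity of $\Lambda_N^\star$ via strong monotonicity of its gradient under $R+\E\diag(\beta)\E^T\succ 0$ --- a step the paper leaves implicit even though Remark~\ref{rem.AbstractAnalysisGradientSteadyStates} requires it --- so your write-up is, if anything, more complete.
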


\begin{figure}[!t]
\begin{center}
	\subfigure[Network only regularization.] {\scalebox{.25}{\includegraphics{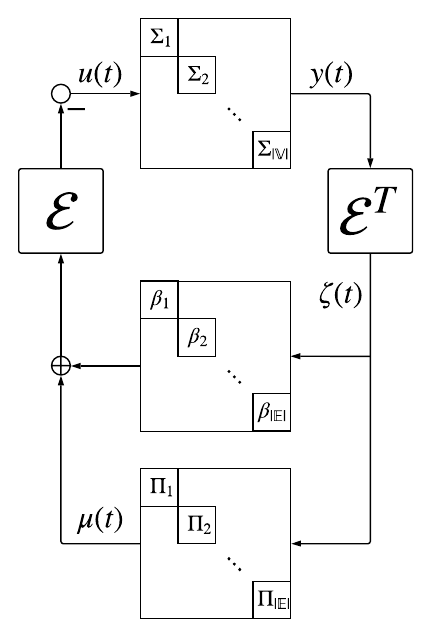}}\label{fig.NetworkOnly}}\hfill
	\subfigure[Hybrid network regularization with one self-regulating agent $\Sigma_{|\V|}$.] {\scalebox{.25}{\includegraphics{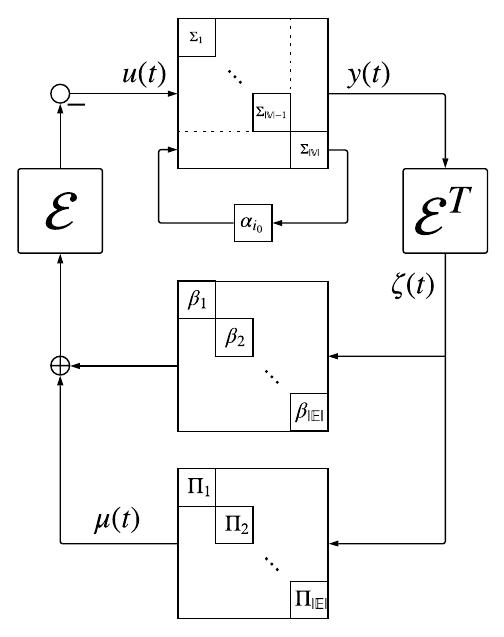}}\label{fig.Hybrid}}
  \caption{Block diagrams of suggested network-based regularization schemes.}
\end{center}
\vspace{-20pt}
\end{figure}
\vspace{-10pt}
\begin{proof}
By the discussion above, the closed-loop system is a feedback connection of the network-regularized agents $\tilde{\Sigma}$, which are output-strictly passive with respect to any steady-state they have, and the controllers $\Pi$, which are MEIP. Moreover, the augmented agents' steady-state I/O relation $\lambda_N^{-1}$ is the gradient of the function $\Lambda_N^\star$. The proof now follows from Theorem \ref{thm.Analysis} and Remark \ref{rem.AbstractAnalysisGradientSteadyStates}.
\end{proof}

We now ask ourselves how to ensure $R + \E\diag(\beta)\E^T$ is positive-definite by appropriately choosing the gains $\beta_e$. To answer that question, we prove the following.
\begin{thm}\label{thm.PassivationMatrix}
Let $\rho_1,...,\rho_{|\V|}$ be any real numbers and assume $\G$ is connected.  There exists some $\beta_1,...,\beta_{|\EE|}$ such that $\diag(\rho)+\E\diag(\beta)\E^T$ is positive definite if and only if $\sum_{i\in\V} \rho_i$ is strictly positive. 
\end{thm}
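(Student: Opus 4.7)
The statement has a clean necessary-and-sufficient form, and the proof decomposes naturally along the two implications, with the algebraic graph theory fact $\E^T\mathbf{1}=0$ (columns of the incidence matrix sum to zero) playing the key role in both.

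For the necessity direction, my plan is simply to evaluate the quadratic form on the all-ones vector. Since $\E^T\mathbf{1}=0$, the ``controller'' contribution $\mathbf{1}^T\E\diag(\beta)\E^T\mathbf{1}$ vanishes identically, and we are left with
\[
\mathbf{1}^T\bigl(\diag(\rho)+\E\diag(\beta)\E^T\bigr)\mathbf{1}=\sum_{i\in\V}\rho_i.
\]
If the full matrix is positive definite, this scalar must be strictly positive, giving the ``only if'' half immediately, with no use of connectedness.

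For sufficiency, I would make the problem one-parameter by guessing $\beta_e=\beta$ uniformly for a large $\beta>0$ to be chosen. Then $\E\diag(\beta)\E^T=\beta L$ where $L$ is the graph Laplacian, and since $\G$ is connected, $L\succeq 0$ with $\ker L=\mathrm{span}\{\mathbf{1}\}$ and smallest nonzero eigenvalue $\lambda_2>0$. Decompose any $x\in\mathbb{R}^{|\V|}$ as $x=c\mathbf{1}+z$ with $z\perp\mathbf{1}$. Then
\[
x^T\!\bigl(\diag(\rho)+\beta L\bigr)x = c^2\!\sum_i\rho_i + 2c\,\rho^{\!T}z + z^T\diag(\rho)z + \beta\,z^TLz,
\]
using $L\mathbf{1}=0$. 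The $c^2$ coefficient is strictly positive by hypothesis; the last term is at least $\beta\lambda_2\|z\|^2$; and $z^T\diag(\rho)z\geq(\min_i\rho_i)\|z\|^2$.

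The only obstacle — and really the crux of the argument — is controlling the cross term $2c\,\rho^{\!T}z$, which mixes the two components. I would absorb it with a Young-type inequality: for any $\varepsilon>0$,
\[
\bigl|2c\,\rho^{\!T}z\bigr|\ \leq\ \varepsilon c^2 + \tfrac{1}{\varepsilon}\|\rho\|^2\|z\|^2.
\]
Choosing $\varepsilon\in(0,\sum_i\rho_i)$ keeps the $c^2$ coefficient strictly positive, and then selecting $\beta$ large enough that
\[
\beta\lambda_2 + \min_i\rho_i - \tfrac{\|\rho\|^2}{\varepsilon}\ >\ 0
\]
makes the $\|z\|^2$ coefficient strictly positive as well. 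This yields $x^TMx>0$ for every $x\neq 0$, proving $\diag(\rho)+\beta L\succ 0$ and completing the ``if'' direction. Connectedness enters precisely through $\lambda_2>0$, without which no amount of uniform $\beta$ could stabilize the non-$\mathbf{1}$ components.
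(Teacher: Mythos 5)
Your proof is correct and follows essentially the same route as the paper's: necessity by testing the quadratic form on $\mathbf{1}$ (using $\E^T\mathbf{1}=0$), and sufficiency by taking a uniform gain $\beta$, decomposing $x$ along $\mathbf{1}$ and its orthogonal complement, and using connectedness only through $\lambda_2(\G)>0$. The sole difference is cosmetic: you absorb the cross term with a Cauchy--Schwarz/Young estimate, whereas the paper completes a square so as to extract an explicit threshold $\mathrm{\bf b}$ for the gain that it reuses later; both are valid.
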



\begin{proof}
Suppose first that there exist some $\beta_1,...,\beta_{|\EE|}$ such that $X = R + \E\diag(\beta)\E^T$ is positive definite. Then $\mathds{1}_{|\V|}^T X \mathds{1}_{|\V|} > 0$, where $\mathds{1}_{|\V|}$ is the all-one vector. However, $\E^T \mathds{1}_{|\V|} = 0$, so $0<\mathds{1}_{|\V|}^T X \mathds{1}_{|\V|} = \mathds{1}_{|\V|}^TR\mathds{1}_{|\V|} = \sum_i \rho_i$.

As for the other direction, suppose that $\sum_i \rho_i > 0$. We show that if $b$ is large enough, then $R + b\E\E^T$ is positive definite, which will conclude the proof as we can choose $\beta_e = b$. As the matrix in question is symmetric, it's enough to show that for any $x \in \mathbb{R}^{|\V|}$, $x^T(R + b\E\E^T)x \ge 0$.

We can write any vector $x\in\mathbb{R}^{|\V|}$ as $x = \alpha\mathds{1}_{|\V|} + \E z$ for $z\in\mathbb{R}^{|\EE|}$ orthogonal to $\ker(\E)$. The quadratic form is\small
\begin{align*}
x^T(R \hspace{-2pt}+\hspace{-2pt} b\E\E^T)x =& 
\alpha^2\sum_{i}\rho_i \hspace{-2pt}+\hspace{-2pt}  z^T\E^TR(2\alpha\mathds{1}_{|\V|} + \E z) +b||\E^T\E z||^2,
\end{align*}\normalsize
where we use $\E^T\mathds{1}_{|\V|} = 0$. Now, $\E^T\E$ is a positive semi-definite matrix, and $z$ is orthogonal to its kernel, as $\ker(\E^T\E) = \ker(\E)$. Thus $||\E^T\E z|| \ge \lambda_{\mathrm{min},\neq 0}(\E^T\E)||z||$, where $\lambda_{\mathrm{min},\neq 0}(\E^T\E)$ is the minimal non-zero eigenvalue. Moreover, $\E^T\E$ and $\E\E^T$ share nonzero eigenvalues \cite{Horn2012}, hence the minimal nonzero eigenvalue of $\E^T\E$ is $\lambda_2(\G)$, the second eigenvalue of the graph Laplacian. Therefore the quadratic form is bounded from below by{\small{
\begin{align*}
&\alpha^2\sum_{i}\rho_i \hspace{-2pt}+\hspace{-2pt} 2\alpha \hspace{-2pt}z^T\E^TR\mathds{1}_{|\V|} + z^T\E^T\diag(\rho)\E z\hspace{-2pt} +\hspace{-2pt} b\lambda_2^2(\G)||z||^2  \\ &
=\bigg\|\frac{\sqrt{\sum_i \rho_i}}{\sqrt{|\V|}}\alpha\mathds{1}_{|\V|} + \frac{\sqrt{|\V|}}{\sqrt{\sum_i \rho_i}}R\E z\bigg\|^2 \\&+ z^T\bigg(\E^TR\E - \frac{|\V|}{\sum_i \rho_i}\E^TR^2\E + b\lambda_2(\G)^2\mathrm{Id}\bigg)z.
\end{align*} }} \normalsize
The first summand is non-negative, as it is a norm, and the second is non-negative if the symmetric matrix multiplying $z^T$ and $z$ is positive-definite, which is guaranteed if $$b > \frac{\lambda_{\mathrm{max}}\big(\frac{|\V|}{\sum_i \rho_i}\E^T\diag(\rho)^2\E - \E^T\diag(\rho)\E\big)}{\lambda_2(\G)^2} :=\mathrm{\bf b},$$ where $\lambda_{\mathrm{max}}(\cdot)$ is the largest eigenvalue of a matrix. This completes the proof.
\end{proof}
\begin{rem}
Note that if $\G$ is not connected, the result of Theorem \ref{thm.PassivationMatrix} holds if we require that the sum is positive on each connected component.
\end{rem}
\begin{exam}
One might expect that if we only demand positive-semi definiteness in Theorem \ref{thm.PassivationMatrix}, we might be able to accommodate $\sum_{i\in\V_c} \rho_i = 0$. However, this is not the case. Consider a two-agent case with $\rho_1 = 1$ and $\rho_2 = -1$. There is only one edge in the case, $\diag(\rho) + \E\diag(\beta)\E^T=\begin{bmatrix} 1+\beta & -\beta \\ -\beta & -1+\beta\end{bmatrix}$. This matrix can never be positive semi-definite, as its determinant is equal to $-1 < 0$. Thus, the agents cannot be passivized using the network.  
\end{exam}

Theorem \ref{thm.PassivationMatrix} not only gives a good characterization of the diffusively-coupled systems that can be passivized, but also gives a prescription for network passivation, namely have a uniform gain of size $\mathrm{\bf b}+\epsilon$
, for some $\epsilon > 0$, over all edges. However, it shows that not all diffusively-coupled systems satisfying Assumption \ref{assump.1} can be network-passivized. This can be problematic in some applications, e.g. open networks in which the sign of the sum $\sum_{i} \rho_i$ might be volatile, meaning that there might be long periods in which we cannot provide a solution. For that reason, we consider a more general approach in the next subsection.

\subsection{Hybrid Approaches for Regularization and Passivation}
We return to the problem of regularizing the non-convex network optimization problem (OPP). We consider a quadratic regularization term of the form $\frac{1}{2}\zeta^T\diag(\beta)\zeta$ as before, but add another Tikhonov-type regularization type in the spirit of \cite{Jain2018}, $\frac{1}{2}\mathrm y^T \diag(\alpha) y$. Namely, we consider the Hybrid-Regularized Optimal Potential Problem (HROPP),
\begin{align}\tag{HROPP} \label{HROPP}
\begin{split}
&\min_{{\rm y}, {\zeta}}~~K^\star({\rm y}) + \Gamma({\zeta})+\frac{1}{2}\zeta^T\diag(\beta)\zeta + \frac{1}{2}\mathrm y^T\diag(\alpha)\mathrm y\\
&{s.t.}~~~~E^T{\rm y} = {\zeta}.
\end{split}
\end{align}
As we'll see, unlike in \cite{Jain2018}, the vector $\alpha = [\alpha_1,...,\alpha_{|\V|}]^T$ can be very sparse. Namely, we can prove the following.
\begin{thm} \label{thm.HybridLambda}
Consider the agents $\Sigma_i$ satisfying Assumption \ref{assump.1}. Let $\Lambda_H(\mathrm y) =K^\star({\rm y}) + \frac{1}{2}\zeta^T\diag(\beta)\zeta + \frac{1}{2}\mathrm y^T\diag(\alpha)\mathrm y $. Then $\Lambda_H^\star$ is differentiable. Moreover, consider the MIMO system $\tilde{\Sigma}$ given by the agents with the output-feedback control
\begin{align} \label{eq.feedback}
u = v - \E\diag(\beta)\E^Ty - \diag(\alpha)y,
\end{align}
with some new exogenous input $v \in \mathbb{R}^n$, and let $\lambda_H$ be its I/O steady-state relation. Then $\lambda_H^{-1}$ is a function, and $\nabla \Lambda_H^\star = \lambda_H^{-1}$. 
\end{thm}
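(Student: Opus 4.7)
The plan is to follow the template established by Proposition~\ref{thm.Lambda} almost verbatim, simply carrying along the extra diagonal term $\diag(\alpha)\mathrm y$ coming from the self-output regularization. The statement has two parts: differentiability of $\Lambda_H^\star$, and the identification of its gradient with the inverse steady-state relation $\lambda_H^{-1}$ of the feedback-augmented MIMO system $\tilde{\Sigma}$.

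First I would verify differentiability. By Assumption~\ref{assump.1}, each $K_i^\star$ is differentiable with $\nabla K_i^\star = k_i^{-1}$, so $K^\star$ is differentiable. The two remaining summands in $\Lambda_H^\star$ are quadratic forms in $\mathrm y$ (after substituting $\zeta = \E^T\mathrm y$), hence smooth, so $\Lambda_H^\star$ is differentiable and
\begin{align*}
\nabla \Lambda_H^\star(\mathrm y) = k^{-1}(\mathrm y) + \E\diag(\beta)\E^T\mathrm y + \diag(\alpha)\mathrm y.
\end{align*}

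Second, I would identify this gradient with $\lambda_H^{-1}$. Take any $\mathrm y\in\mathbb{R}^{|\V|}$ and set $\mathrm u = k^{-1}(\mathrm y)$, which is well-defined and unique by Assumption~\ref{assump.1}, so that $(\mathrm u,\mathrm y)$ is a steady-state I/O pair for $\Sigma$. Defining $\mathrm v = \nabla \Lambda_H^\star(\mathrm y)$ and substituting the expression above yields
\begin{align*}
\mathrm v = \mathrm u + \E\diag(\beta)\E^T\mathrm y + \diag(\alpha)\mathrm y,
\end{align*}
which rearranges exactly to the feedback law \eqref{eq.feedback} with inputs $(\mathrm v,\mathrm y)$. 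Hence $(\mathrm v,\mathrm y)$ is a steady-state I/O pair of $\tilde{\Sigma}$, proving $\nabla \Lambda_H^\star(\mathrm y) \in \lambda_H(\mathrm y)^{-1}$ in the relational sense. Running this construction in reverse (starting from a steady-state pair $(\mathrm v,\mathrm y)$ of $\tilde{\Sigma}$, recovering $\mathrm u$, invoking that $k^{-1}$ is a function) shows $\mathrm v$ is uniquely determined by $\mathrm y$, so $\lambda_H^{-1}$ is in fact a function on all of $\mathbb{R}^{|\V|}$ and coincides with $\nabla \Lambda_H^\star$.

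There is essentially no new obstacle compared with Proposition~\ref{thm.Lambda}: the only point to be careful about is that the added term $\diag(\alpha)\mathrm y$ neither breaks differentiability nor destroys the single-valuedness of $\lambda_H^{-1}$, both of which are immediate since $\diag(\alpha)$ is a constant linear map. The argument would be presented as a short adaptation of the earlier proof rather than a standalone derivation, emphasizing that the new self-output loop transformation $-\diag(\alpha)y$ simply contributes an additive linear piece to $\nabla \Lambda_H^\star$, mirroring the role that $-\E\diag(\beta)\E^T y$ played previously.
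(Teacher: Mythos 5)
Your proof is correct and follows exactly the route the paper intends: the paper's own proof of this theorem is literally the one-line remark ``Similar to the proof of Proposition~\ref{thm.Lambda},'' and your write-up is precisely that adaptation, carrying the extra linear term $\diag(\alpha)\mathrm y$ through the gradient computation and the steady-state identification. No differences to report.
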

\begin{proof}
Similar to the proof of Proposition \ref{thm.Lambda}.
\end{proof}
\begin{thm}\label{thm.ConvergenceLambda}
Let $\{\Sigma_i\}_{i\in\V}$ be agents satisfying Assumption \ref{assump.1}. Let $\{\Pi_e\}_{e\in\EE}$ be MEIP controllers with stacked integral function $\Gamma$. Consider the closed-loop system with the controller input $\zeta = \E^T y$ and the control input $u = -\E\mu-\E\diag(\beta)\zeta-\diag(\alpha)y$. If the matrix $\diag(\rho+\alpha) + \E\diag(\beta)\E^T$ is positive-definite, then the closed-loop system converges. Moreover, the steady-state output $\mathrm y$ and $\zeta = \E^T\mathrm y$ are the optimal solutions to the problem (HROPP).
\end{thm}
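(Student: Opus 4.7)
The plan is to follow exactly the template set by the proof of Theorem~\ref{thm.NROPPAnalysis}: view the closed loop as a feedback interconnection of the augmented agents $\tilde{\Sigma}$ (produced by Theorem~\ref{thm.HybridLambda}) with the MEIP controllers $\Pi$, establish output-strict passivity of $\tilde{\Sigma}$, and then invoke Theorem~\ref{thm.Analysis} together with Remark~\ref{rem.AbstractAnalysisGradientSteadyStates}. The novelty compared to the network-only case is only that the hybrid feedback $u = v - \E\diag(\beta)\E^T y - \diag(\alpha) y$ contributes an additional $-\diag(\alpha)y$ term, which will add a $\diag(\alpha)$ block to the quadratic form in the dissipation inequality.

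The passivity step is the analogue of Proposition~\ref{prop.passivity}, and I expect it to be the only nontrivial computation. Fix a steady-state I/O pair $(\mathrm v,\mathrm y)$ of $\tilde{\Sigma}$; by the definition of the feedback there is a corresponding steady-state I/O pair $(\mathrm u,\mathrm y)$ of $\Sigma$ with $\mathrm v = \mathrm u + \E\diag(\beta)\E^T\mathrm y + \diag(\alpha)\mathrm y$. Summing the EIOPS storage inequalities \eqref{eq.OutputPassivityIneq} for the agents with $S(x)=\sum_i S_i(x_i)$ yields
\begin{align*}
\dot S \;\le\; (y-\mathrm y)^T(u-\mathrm u) - (y-\mathrm y)^T R (y-\mathrm y).
\end{align*}
Substituting $u-\mathrm u = (v-\mathrm v) - \E\diag(\beta)\E^T(y-\mathrm y) - \diag(\alpha)(y-\mathrm y)$ and grouping the quadratic terms gives
\begin{align*}
\dot S \;\le\; (y-\mathrm y)^T(v-\mathrm v) - (y-\mathrm y)^T\!\bigl[\diag(\rho+\alpha) + \E\diag(\beta)\E^T\bigr](y-\mathrm y),
\end{align*}
so the hypothesis that this matrix is positive-definite immediately makes $\tilde{\Sigma}$ output-strictly passive with respect to every steady-state I/O pair, with storage $S$.

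Once $\tilde{\Sigma}$ is output-strictly passive, the rest is assembly. By Theorem~\ref{thm.HybridLambda}, $\Lambda_H^\star$ is differentiable and $\nabla\Lambda_H^\star = \lambda_H^{-1}$, i.e.\ the augmented agents' inverse steady-state relation is the gradient of the convex function appearing in the cost of \eqref{HROPP}. The closed loop is then a feedback connection of the output-strictly passive $\tilde{\Sigma}$ with the MEIP controllers $\Pi$, precisely the abstract setting of Remark~\ref{rem.AbstractAnalysisGradientSteadyStates}. Theorem~\ref{thm.Analysis} applied to this interconnection gives convergence of $u,y,\zeta,\mu$ to constant steady-states and identifies the steady-state $(\mathrm y,\zeta)$ with an optimum of the optimal potential problem for the pair $(\tilde{\Sigma},\Pi)$; but that optimal potential problem, by construction of $\Lambda_H^\star$ and $\zeta = \E^T\mathrm y$, is exactly \eqref{HROPP}.

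The only real obstacle is the passivity computation above, and it is essentially a one-line bookkeeping exercise: one must be careful to keep the storage $S$ independent of the steady-state chosen (which is granted by Assumption~\ref{assump.1}) and to notice that the extra $\diag(\alpha)$ from the feedback combines with $R$ into $\diag(\rho+\alpha)$, producing precisely the matrix whose positive-definiteness is assumed. No convexity or coercivity arguments beyond those already used in Theorems~\ref{thm.Analysis} and~\ref{thm.NROPPAnalysis} are needed.
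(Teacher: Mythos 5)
Your proposal is correct and follows exactly the route the paper intends: the paper's own proof of this theorem is simply the remark that it is ``similar to the proof of Theorem~\ref{thm.NROPPAnalysis},'' and your write-up fills in precisely those details, with the only substantive change being that the extra $-\diag(\alpha)y$ feedback term combines with $R$ to yield the matrix $\diag(\rho+\alpha)+\E\diag(\beta)\E^T$ in the dissipation inequality. No discrepancies.
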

\begin{proof}
Similar to the proof of Theorem \ref{thm.NROPPAnalysis}. 
\end{proof}

\begin{cor} (Almost Network-Only Regularization)
Let $\{\Sigma_i\}_{i\in\V}$ be agents satisfying Assumption \ref{assump.1}, and suppose that the graph $\G$ is connected. Let $\V_{sr}\subseteq \V$ be any nonempty subset of the agents. Let $\{\Pi_e\}_{e\in\EE}$ be MEIP controllers with stacked integral function $\Gamma$. Consider the closed-loop system with the controller input $\zeta = \E^T y$ and the control input $u = -\E\mu-\E\diag(\beta)\zeta-\diag(\alpha)y$. Then there exist vectors $\alpha \in \mathbb{R}^{|\V|}, \beta \in \mathbb{R}^{|\EE|}$ such that:
\begin{itemize}
\item[i)] For any vertex $i\not\in\V_{sr}, \alpha_i = 0$.
\item[ii)] The closed-loop system converges to a steady-state.
\item[iii)] The steady-state output $\mathrm y$ and $\zeta = \E^T\mathrm y$ are the optimal solutions to the optimization problem (HROPP).
\end{itemize}
\end{cor}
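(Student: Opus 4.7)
The plan is to reduce the corollary to Theorem \ref{thm.ConvergenceLambda} and Theorem \ref{thm.PassivationMatrix}, by interpreting the added self-feedback term $-\diag(\alpha)y$ as an effective modification of the individual passivity indices from $\rho_i$ to $\rho_i + \alpha_i$. Once the right $\alpha$ is chosen, the problem becomes a ``network-only'' passivation for agents with shifted indices, to which Theorem \ref{thm.PassivationMatrix} applies.

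Concretely, I would first set $\alpha_i = 0$ for every $i \notin \V_{sr}$, which immediately secures condition (i). It then remains to choose $\alpha_i$ for $i \in \V_{sr}$ and $\beta \in \mathbb{R}^{|\EE|}$ such that the hypothesis $\diag(\rho+\alpha) + \E\diag(\beta)\E^T \succ 0$ of Theorem \ref{thm.ConvergenceLambda} holds. Applying Theorem \ref{thm.PassivationMatrix} to the shifted indices $\tilde{\rho}_i := \rho_i + \alpha_i$, a sufficient condition (assuming $\G$ is connected, as we do) is that $\sum_{i \in \V} (\rho_i + \alpha_i) > 0$. Since $\V_{sr}$ is nonempty, this sum can be made strictly positive by picking any single $i_0 \in \V_{sr}$ and setting $\alpha_{i_0}$ large enough so that $\alpha_{i_0} > -\sum_{i \in \V} \rho_i$, while keeping $\alpha_i = 0$ for the remaining indices (including other vertices in $\V_{sr}$, if desired).

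With this $\alpha$ fixed, Theorem \ref{thm.PassivationMatrix} provides a uniform gain $b$ (in fact any $b > \mathbf{b}$, computed with $\rho$ replaced by $\rho+\alpha$) such that $\beta_e = b$ for all $e \in \EE$ yields $\diag(\rho+\alpha) + \E\diag(\beta)\E^T \succ 0$. Feeding this choice of $(\alpha, \beta)$ into Theorem \ref{thm.ConvergenceLambda} then delivers (ii) convergence of the closed-loop system to a steady-state, and (iii) the identification of the steady-state output $\mathrm{y}$ and $\zeta = \E^T \mathrm{y}$ with the optimal solution of (HROPP).

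I do not expect any serious obstacle here, since both of the nontrivial ingredients are already established. The only point deserving attention is the conceptual one of recognizing that the self-feedback term $-\diag(\alpha)y$ in the closed loop plays exactly the same role as absorbing $\alpha_i$ into the passivity margin of agent $i$, so that the positivity condition on the matrix $\diag(\rho+\alpha)+\E\diag(\beta)\E^T$ is indeed the relevant one. Once this is noted, the nonemptiness of $\V_{sr}$ is exactly the slack one needs to force the arithmetic condition $\sum_i (\rho_i + \alpha_i) > 0$ regardless of the sign of $\sum_i \rho_i$, and the corollary follows.
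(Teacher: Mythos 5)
Your proposal is correct and follows essentially the same route as the paper: reduce to Theorem \ref{thm.ConvergenceLambda}, invoke Theorem \ref{thm.PassivationMatrix} with the shifted indices $\rho_i+\alpha_i$, and use the nonemptiness of $\V_{sr}$ to pick a single $i_0$ with $\alpha_{i_0}>-\sum_i\rho_i$ (the paper takes $\alpha_{i_0}=1-\sum_i\rho_i$) and $\alpha_i=0$ elsewhere. No substantive difference.
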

An example of the closed-loop system for a single self-regulating agent $\Sigma_{|\V|}$ can be seen in Fig. \ref{fig.Hybrid}.
\begin{proof}
By Theorem \ref{thm.ConvergenceLambda}, it's enough to find some $\alpha,\beta$ satisfying the first condition such that $\diag(\rho+\alpha) + \E\diag(\beta)\E^T$ is positive-definite. Fixing $\alpha$, Theorem \ref{thm.PassivationMatrix} implies that there is some $\beta$ such that $\diag(\rho+\alpha) + \E\diag(\beta)\E^T$ is positive-definite if and only if $\sum_i (\rho_i + \alpha_i) > 0$, or $\sum_i \alpha_i > -\sum_i \rho_i$. Taking any $i_0 \in \V_{sr}$ and choosing $\alpha_i = 1-\sum_i \rho_i$ for $i=i_0$, and $\alpha_i=0$ otherwise, finishes the proof.
\end{proof}

\begin{rem}
The set $\V_{sr}$ in the theorem can be thought of the set of vertices that can sense their own output, and are amenable to the network designer (i.e., self-regularizable agents). 
The theorem shows the strength of the hybrid approach for regularization of (OPP). we can choose almost all $\alpha_i$-s to be zero - namely one agent is enough. 
In practice, this solution is less restrictive than the one offered in \cite{Jain2018}. 
\end{rem}
\if(0)
Let us conclude all the results into one, concise theorem:
\textcolor{red}{(for space, we may just eliminate this theorem since it is "review" in the end)}
\begin{thm} \label{thm.AnalysisHROPP}
Let $\{\Sigma_i\}_{i\in\V}$ be agents satisfying Assumption \ref{assump.1}. Let $\{\Pi_e\}_{e\in\EE}$ be MEIP controllers with stacked integral function $\Gamma$. Consider the closed-loop system with the controller input $\zeta = \E^T y$ and the control input $u = -\E\mu-b\E\zeta-\diag(\alpha)y$. Suppose that $\sum_{i\in\V} \alpha_i > -\sum_{i\in\V} \rho_i$, and that $b > \frac{\lambda_{\mathrm{max}}\big(\frac{|\V|}{\sum_i \rho_i+\alpha_i}\E^T\diag(\rho+\alpha)^2\E - \E^T\diag(\rho+\alpha)\E\big)}{\lambda_2(\G)^2}$. Then the closed-loop system converges. Moreover, the steady-state output $\mathrm y$ and $\zeta = \E^T\mathrm y$ are the optimal solutions to the problem (HROPP):.
\begin{align}\tag{HROPP}
\begin{split}
&\min_{{\rm y}, {\zeta}}~~K^\star({\rm y}) + \Gamma({\zeta})+\frac{1}{2}b\zeta^T\zeta + \frac{1}{2}\mathrm y^T\diag(\alpha)\mathrm y\\
&{s.t.}~~~~E^T{\rm y} = {\zeta}.
\end{split}
\end{align}
\end{thm}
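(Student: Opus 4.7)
The plan is to reduce directly to Theorem~\ref{thm.ConvergenceLambda}. The control law under consideration uses the uniform edge gain $\beta_e = b$ for all $e\in\EE$, so $\E\diag(\beta)\E^T = b\,\E\E^T = bL_\G$, the scaled graph Laplacian. Consequently, the matrix whose positive-definiteness is required by Theorem~\ref{thm.ConvergenceLambda} becomes $M(b) := \diag(\rho+\alpha) + b\,\E\E^T$. Once I verify $M(b)\succ 0$, Theorem~\ref{thm.ConvergenceLambda} delivers both conclusions at once: trajectories of the closed loop converge to a steady-state, and that steady-state coincides with the minimizer of (HROPP). The match with (HROPP) is automatic because the penalty $\tfrac{1}{2}b\,\zeta^T\zeta$ in the cost is exactly $\tfrac{1}{2}b\,\mathrm y^T\E\E^T\mathrm y$ under the constraint $\zeta = \E^T\mathrm y$, which is the same quadratic induced by the feedback term $-b\E\zeta$.

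The technical step is thus verifying $M(b)\succ 0$, and my proposal is to repeat the quadratic-form computation in the proof of Theorem~\ref{thm.PassivationMatrix} verbatim, with the shifted vector $\rho+\alpha$ playing the role of $\rho$. The hypothesis $\sum_{i\in\V}\alpha_i > -\sum_{i\in\V}\rho_i$ is exactly $\sum_i(\rho_i+\alpha_i) > 0$, which is the analogue of the strictly-positive-sum condition used in that proof. Following that argument, I decompose an arbitrary $x\in\mathbb{R}^{|\V|}$ as $x = c\,\mathds{1}_{|\V|} + \E z$ with $z\perp\ker(\E)$, expand $x^T M(b) x$ using $\E^T\mathds{1}_{|\V|}=0$ and the bound $\|\E^T\E z\|^2 \ge \lambda_2(\G)^2\|z\|^2$ (valid since $\lambda_2(\G)$ is the smallest nonzero eigenvalue of $\E^T\E$), and complete the square in $c$. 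The residual quadratic form in $z$ has matrix
\[
\E^T\diag(\rho+\alpha)\E - \tfrac{|\V|}{\sum_i(\rho_i+\alpha_i)}\E^T\diag(\rho+\alpha)^2\E + b\,\lambda_2(\G)^2\,\mathrm{Id},
\]
which is positive-definite precisely under the explicit eigenvalue bound on $b$ stated in the theorem; hence $M(b)\succ 0$.

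The only mild obstacles I anticipate are bookkeeping rather than substance. First, the decomposition $x = c\,\mathds{1}_{|\V|} + \E z$ presumes connectedness of $\G$ so that $\ker(\E^T) = \mathrm{span}(\mathds{1}_{|\V|})$; this is implicit in the statement because $\lambda_2(\G)$ appears in the denominator of the bound (and the statement is vacuous otherwise). Second, I must track strict versus non-strict inequality carefully: the strict bound $b > \mathbf{b}$ produces \emph{strict} positive-definiteness of $M(b)$, which is what gives output-strict passivity of the augmented agent system $\tilde\Sigma$ that Theorem~\ref{thm.ConvergenceLambda} actually needs. With those points handled, the proof is the composition of the quadratic-form argument from Theorem~\ref{thm.PassivationMatrix} with the analysis conclusion of Theorem~\ref{thm.ConvergenceLambda}, and adds no essentially new ideas.
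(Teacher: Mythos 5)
Your proposal is correct and follows exactly the route the paper intends: the paper leaves this theorem without an explicit proof (it is presented as a summary of earlier results), but its proof of the ``Almost Network-Only Regularization'' corollary uses precisely your composition of Theorem~\ref{thm.ConvergenceLambda} with the quadratic-form argument of Theorem~\ref{thm.PassivationMatrix} applied to $\rho+\alpha$ in place of $\rho$. Your added remarks on the implicit connectedness of $\G$ and on strict positive-definiteness are accurate and consistent with the paper's argument.
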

\fi
\section{Case Studies}\label{sec.Simulations}
Consider the traffic dynamics model proposed in \cite{Bando1995}, in which vehicles adjust their velocity $x_i$ according to the equation
$\dot{x}_i = \kappa_i(V_i(\Delta p) - x_i)$,
 where $\kappa_i>0$ is a constant representing the sensitivity of the $i$-th driver, and 
\begin{align}
V_i(\Delta p) = V_i^0 + V_i^1\sum_{j\sim i} \tanh(p_j - p_i),
\end{align}
is the adjustment, where $V_i^0$ are the preferred velocities, and $V_i^1$ are the ``sensitivity coefficients." This model was studied in \cite{Burger2014}, where it was shown that it can inhibit a clustering phenomenon. In \cite{Jain2018}, the case of $\kappa_i < 0$, attributed to drowsy driving, was studied. There, a self-gain-feedback was applied for each agent, resulting in a network of MEIP agents.

Consider a case where only some agents know their own velocity (e.g., by a GNSS measurement). Thus, agents which have no GNSS reception cannot implement the regularization of (OPP), or the self-feedback loops, proposed in \cite{Jain2018}. Instead, we opt for the hybrid regularization suggested above.

The model is a diffusively coupled network with the agents $\Sigma_i: \dot{x_i} = \kappa_i(-x_i +V_i^0+V_i^1u),\ y_i = x_i$ and the controllers $\Pi_e:\ \dot{\eta}_e = \zeta_e,\ \mu_e = \tanh(\eta_e)$. The agents are EIOPS if $V_i^1\kappa_i > 0$, with $\rho_i = \kappa_i$, so $\kappa_i > 0$ corresponds to output-strict MEIP. We suppose that only agent $i_0$ has GNSS reception, so we implement a correction term of the form $\alpha_{i_0} y_{i_0}^2 + \beta\zeta^T\zeta$ to (OPP), giving us (HROPP). The new control law is $u(t) = -\alpha_{i_0} x_{i_0}e_{i_0} - \beta \E\E^T x - V(\Delta p)$ where $V(\Delta p) = [V_1(\Delta p),\cdots,V_n(\Delta p)]^T$, and $e_i$ is the $i$-th standard basis vector. Only the states $x_{i_0}$ and $\E^T x$ are used in the control law, meaning that no agent but $i_0$ is required to know its velocity in a global frame of reference, but only positions and velocities relative to its neighbors. 

To illustrate this, we consider a network of $n=100$ agents, all connected to each other, with parameters $\kappa_i$ randomly chosen either as $-1$ (w.p. $1/3$) and $1$ (w.p. $2/3$). Moreover, the parameters $V_i^0$ were chosen as a Gaussian mixture model, with half of the samples having mean $20$ and standard deviation $15$, and half having mean $120$ and standard deviation $15$. Lastly, $V_i^1$ where chosen as $0.8\kappa_i$. In \cite{Burger2014}, it is shown that (OPP) in this case is given by
\begin{align*}
\min_{y,\zeta}\quad &\sum_i \frac{1}{2V_i^1}(y_i-V_i^0)^2 + \sum_e |\zeta_e|& \;
\mathrm{s.t.}\quad &\zeta = \E^Ty,&
\end{align*}
meaning that (HROPP) is given by:
\begin{align*}
\min_{y,\zeta}\quad &\sum_i \frac{1}{2V_i^1}(y_i-V_i^0)^2 + \sum_e |\zeta_e| + \alpha_{i_0} y_{i_0}^2 + \beta\sum_e \zeta_e^2&\\
\mathrm{s.t.}\quad &\zeta = \E^Ty.&
\end{align*}

Here, the sum $\sum_i \rho_i=\sum_i \kappa_i$ is positive, so we use the network-regularization method, choosing $\alpha_{i_0} = 0$, and (HROPP) reduces to (NROPP). Choosing $\beta = \mathrm{\bf b} + \epsilon$, we apply Theorems \ref{thm.ConvergenceLambda} and \ref{thm.PassivationMatrix} to conclude that the system converges, and find its steady-state limit.
We plot the trajectories of the system in Fig. \ref{fig.CaseStudyTraj}, as well as the minimizer of (NROPP) in Fig. \ref{fig.CaseStudyTheoreticalLimit}. It is easily seen that the steady-state value of the system matches the forecast, namely the minimizer of (NROPP). It should be noted that we get a clustering phenomenon, as noted in \cite{Burger2014}. However, the clustering is far less refined than one expects given the simulation results presented in \cite{Burger2014}, and the Gaussian mixture model chosen for $V_i^0$. This is due to the ever-going consensus feedback $-\beta\E\E^T x$ appearing in $u$, which does not only passivizes the system, but also forces the trajectories closer to a consensus. 

\begin{figure}[!t]
\begin{center}
	\subfigure[Vehicles' trajectories under network-only regularization.] {\scalebox{.30}{\includegraphics{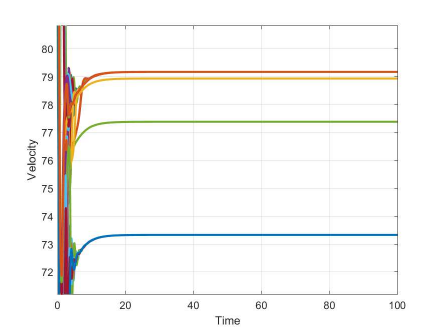}}\label{fig.CaseStudyTraj}}\hfill
	\hfill\subfigure[Asymptotic behaviour predicted by (NROPP).] {\scalebox{.30}{\includegraphics{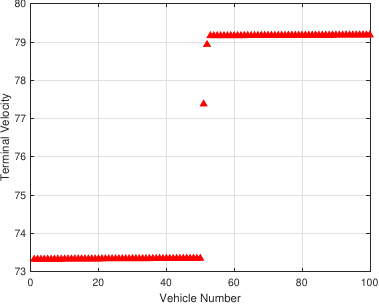}}\label{fig.CaseStudyTheoreticalLimit}}
  \caption{Traffic control model.}
\vspace{-8pt}
\end{center}
\vspace{-20pt}
\end{figure}

\section{Conclusion} \label{sec.Conc}
We considered a diffusively-coupled network of equilibrium-independent output-passive-short (EIOPS) agents. 
As the systems are not MEIP, the cost function of the associated network optimization problem (OPP) need not be convex. We proposed a regularization term, which is based only on the network-level variables $\zeta$, and proved that it corresponds to applying a network-only feedback term on the agents. In turn, we showed that if $\sum_i \rho_i$ is positive, where $\rho_i$ is the passivity parameter of the $i$-th agent,  then we successfully convexify the cost function of (OPP), and that steady-state outputs of the new closed-loop dynamical system correspond to minimizers of the regularized constrained minimization problem, (NROPP). We also suggested a hybrid approach, in which we try and regularize (OPP) both with network-level variables $\zeta$ and a subset of the agent outputs. 
This implies that other agent need not measure their own output, nor self-regulate. 
We showed that (OPP) can always be convexified using this term, and that steady-state outputs of the new closed-loop system correspond to minimizers of the regularized constrained minimization problem, (HROPP). 
Future research can try and find a more refined network-based regularization term, in which different edges are assigned different gains. 
\bibliographystyle{ieeetr}
\bibliography{main}

\end{document}